\newtheorem{observation}{Observation}
\begin{document}
\title{Efficient Dispersion on an Anonymous Ring in the Presence of Weak Byzantine Robots\thanks{The work of W. K. Moses Jr. was supported in part by a Technion fellowship. A. R. Molla is supported, in part, by DST INSPIRE Faculty Research Grant DST/INSPIRE/04/2015/002801, Govt. of India and ISI DCSW/TAC Project (file number E5412).}}

\author{Anisur Rahaman Molla\inst{1} \and
Kaushik Mondal\inst{2} \and
William K. Moses Jr.\inst{3}}
\authorrunning{Molla et al.}
%
\institute{Computer and Communication Sciences, Indian Statistical Institute, Kolkata, India.\\
\email{molla@isical.ac.in} \and
Dept. of Mathematics, Indian Institute of Technology Ropar, India\\
\email{kaushik.mondal@iitrpr.ac.in} \and
Faculty of Industrial Engineering and Management, Technion - Israel Institute of Technology, Haifa, Israel.
\email{wkmjr3@gmail.com}}

\titlerunning{Dispersion on an Anonymous Ring in the Presence of Byzantine Robots}

\maketitle
\begin{abstract}
The problem of dispersion of mobile robots on a graph asks that $n$ robots initially placed arbitrarily on the nodes of an $n$-node anonymous graph, autonomously move to reach a final configuration where exactly each node has at most one robot on it. This problem is of significant interest due to its relationship to other fundamental robot coordination problems, such as exploration, scattering, load balancing, relocation of self-driving electric cars to recharge stations, etc. The robots have unique IDs, typically in the range $[1,poly(n)]$ and limited memory, whereas the graph is anonymous, i.e., the nodes do not have identifiers. The objective is to simultaneously minimize two performance metrics: (i) time to achieve dispersion and (ii) memory requirement at each robot. This problem has been relatively well-studied when robots are non-faulty.

In this paper, we introduce the notion of Byzantine faults to this problem, i.e., we formalize the problem of dispersion in the presence of up to $f$ Byzantine robots. We then study the problem on a ring while simultaneously optimizing the time complexity of algorithms and the memory requirement per robot. Specifically, we design deterministic algorithms that attempt to match the time lower bound ($\Omega(n)$ rounds) and memory lower bound ($\Omega(\log n)$ bits per robot).

Our main result is a deterministic algorithm that is both time and memory optimal, i.e., $O(n)$ rounds and $O(\log n)$ bits of memory required per robot, subject to certain constraints. We subsequently provide results that require less assumptions but are either only time or memory optimal but not both. We also provide a primitive, utilized often, that takes robots initially gathered at a node of the ring and disperses them in a time and memory optimal manner without additional assumptions required.
\end{abstract}

\section{Introduction}
\label{sec:intro}

What was once the purview of science fiction has gradually descended into the realm of reality. Once only seen in movies, today we see robots used in every facet of life, from assembling parts in factories to robot-assisted surgery. With the ever decreasing costs of hardware, it has become possible to deploy large numbers of robots for various tasks previously unheard of. For example, after the Fukushima incident, large numbers of complex robots were deployed for manifold jobs ranging from investigating radiation levels to helping with the clearing of debris~\cite{B18}. Effective collaboration between robots is also a rising trend among modern day tasks, such as in the case of self-driving cars. Tesla and others are investing a lot of time and money into developing algorithms for these computational entities to effectively communicate with each other in a decentralized manner in order to navigate roads safely. However, even as the tasks themselves grow varied, an important theme common to all the research is a desire for fast algorithms that simultaneously do not require the robots to have much memory.\footnote{Even though the cost of memory is decreasing day by day, a mobile robot may need to run multiple tasks in parallel, each of which adds to the memory requirement of the robot.}

The formal study of independent computational agents and their interactions is both a deep and broad area of research, spanning fields such as population protocols~\cite{AADFP06}, mobile robots~\cite{PRT11}, and programmable matter~\cite{DHRS19}
among others. The specific model of mobile robots on a graph is used to capture the abstraction of agents, limited in their movement capabilities and communication abilities, but free to move in a fixed space. Within this area, problems that are studied take on the form of either having the robots work together to find something in the graph (e.g., exploration~\cite{Bampas:2009,Cohen:2008,Das13,Dereniowski:2015,Fraigniaud:2005,MencPU17}, treasure hunting~\cite{MP15})
or form a certain configuration (e.g., gathering~\cite{CFPS12,CP02,DKLHPW11,P07}, scattering~\cite{Barriere2009,ElorB11,Poudel18,Shibata:2016}, pattern formation \cite{SY99}, convergence \cite{CP04}).

Dispersion is one such problem of the latter category. Introduced in this setting by Augustine and Moses Jr.~\cite{Augustine:2018}, it asks the following question. Given $n$ robots initially placed arbitrarily on an $n$ node graph, devise an algorithm such that the robots reach a configuration where exactly one robot is present on each node. The original paper looked at the trade-offs between time taken to reach this configuration and the memory required by each robot. Subsequent papers~\cite{Kshemkalyani,KMS2019,KMS2020-ICDCN,KMS2020,tamc19} have expanded the scope of this problem, but have always maintained this focus on time and memory efficiency.

However, none of these previous works consider faulty robots. Thus, several natural questions arise: is dispersion possible if there are faulty robots? Specifically, if the faultiness is that of Byzantine faults, which is considered to be the stronger notion among the faults.\footnote{There are mainly two types of faults-- one is ``crash fault'' which means that once a robot crashes, it is dead and will not be active again thereafter; another one is ``Byzantine fault'' which means that a robot is alive throughout and may behave maliciously. Note that the Byzantine fault subsumes the crash fault.} Furthermore, if dispersion is possible, how do the Byzantine robots influence the complexities of the algorithms? In this paper, we answer these questions, showing that dispersion in the presence of Byzantine faults is indeed possible and presenting efficient solutions for Byzantine dispersion on a ring.

The best known algorithm for dispersion on a ring ``without faulty robots'' is quite straightforward, and takes $O(n)$ rounds and $O(\log n)$ bits of memory per robot \cite{Augustine:2018}. This algorithm and the other non-faulty dispersion algorithms do not apply readily when there are Byzantine robots. In fact, the most commonly used techniques for these kinds of problems related to mobile robots on graphs (e.g., dispersion, exploration, scattering) are based on depth first search (DFS) traversals or breadth first search (BFS) traversals--which will not work immediately in the presence of Byzantine robots. The main reason is that it is difficult for a robot to distinguish between a non-Byzantine and a Byzantine robot. Furthermore, unless a robot has enough memory, it cannot remember all the robots (assuming they have unique IDs). Furthermore, the Byzantine robots know the deterministic algorithm and the positions/states of all the non-Byzantine robots. So the Byzantine robots can always occupy the empty nodes where a non-Byzantine robot is supposed to settle in a particular round, and thus a non-Byzantine robot may never get an empty node. Even if a (non-Byzantine) robot can memorize all the robots, it might take a long time to find its correct place (node) in the graph. Sometimes a non-Byzantine robot may guess that a settled robot is a Byzantine robot and settle at the same node--which might result in incorrect dispersion if the guess is wrong.  So either the dispersion is wrong, or it takes a long time, or the robots require large memory. We explore all these difficulties in this paper.


\subsection{Model}\label{subsec:model}
Consider a ring with $n$ nodes. The ring is anonymous in the sense that the nodes are indistinguishable (they do not have identifiers), but the ports have unique labels. Each node in the ring has two ports that correspond to the edges from it, with unique port numbers assigned to each port. Note that an edge between adjacent nodes may have different port numbers assigned to it. Consider $n$ robots initially placed on arbitrary nodes. When all $n$ robots are initially placed on the same node, we call the ring a $\textit{rooted}$ ring.

Robots are distinguishable, i.e., each robot has a unique identifier assigned to it from the range $[1,n^c]$, where $c>1$ is a constant, unless otherwise stated. Two robots co-located on the same node can communicate with each other. One way to understand this communication between robots is as follows. Each robot has two types of memory-- one is {\em exposed}, and the other is {\em unexposed}. Any information present in the exposed memory can be read/scanned by the other co-located robots. Information in the unexposed memory is hidden from the others. Since we assumed that a robot cannot change its ID, each robot's unique identifier would be considered to be permanently stored in its exposed memory and not be changeable.

If a robot moves from one node to an adjacent node, it is aware of both port numbers assigned to the edge through which it passed. We also note that a robot present on a node can observe the port through which another robot enters that node. We recursively define the notion of \emph{clockwise} and \emph{counter-clockwise} directions for each robot. Consider that a given robot moves from node $u$ to node $v$ through $u$'s clockwise edge. Now, for node $v$, the edge $\langle v,u \rangle$ is its counter-clockwise edge and the other edge is its clockwise edge. When a robot first starts the algorithm (before moving anywhere), it assigns the directions of clockwise and counter-clockwise as follows. It observes the port numbers of the node it is initially placed on. Denote the edge with the lower port number as clockwise and the other edge as counter-clockwise.\footnote{In the algorithms, we mention a robot \textit{resets its sense of direction}, i.e., it resets its notion of clockwise and counter-clockwise. That refers to the robot performing this check again and redefining clockwise and counter-clockwise accordingly.} Notice that each robot has its own sense of clockwise and counter-clockwise, but two robots may not agree on this sense.

We adapt the definition of a weak Byzantine robot from~\cite{DPP14}. A Byzantine robot may behave maliciously and arbitrarily, i.e., it may share wrong information, perform moves that are deviations from the algorithm, etc. As in~\cite{DPP14}, we assume that a Byzantine robot cannot fake its ID, i.e., it cannot communicate to a robot that its ID has a value other than the one initially assigned to it. Note that the exposed memory of a Byzantine robot can be read by all other robots co-located with it. Among the $n$ robots, up to $f$ of them are considered to be Byzantine. Some of our algorithms can afford up to $n-1$ Byzantine robots among $n$ robots. Moreover, our algorithms work without knowing the number of Byzantine robots in the system. That is, each robot knows the value of $n$, but need not know the value of $f$, unless otherwise stated. The Byzantine robots may work together to thwart the algorithm. Imagine an adversary coordinates and controls all the Byzantine robots. The adversary has the knowledge of the algorithm, knows the non-Byzantine robots and their states throughout the algorithm. A non-Byzantine robot cannot distinguish between a Byzantine and a non-Byzantine robots in the beginning. By designing our algorithms in the face of such an adversary, we ensure that they are robust to all manner of deviations from these Byzantine robots.

We consider a synchronous system, where in each round a robot performs the following tasks in order: (i) Robots that are co-located at the same node instantaneously and simultaneously read each other's exposed memory.  Robots may perform some local computation and may update information in their unexposed memory. (ii) Robots update their exposed memory as needed. (iii) Each robot either stays at the same node or moves to another node.

Note that task (i) of each round seems to require each robot to have a large memory. However, we have written it this way for ease of understanding. We ensure that our algorithms can simulate task (i) using the memory we allocate to run those algorithms.  However, we do restrict the Byzantine robots to not change their exposed memory before we reach task (ii) of a round. We assume that all robots are initially awake and can engage in the algorithms from the beginning.

Considering the fact that a Byzantine robot can settle at any node (we have no control on it), let us now formally define the problem of dispersion on a ring in the presence of Byzantine robots. Let us call this problem {\em Byzantine dispersion}.
\begin{definition}[Byzantine Dispersion]
\label{def:dispersion}
Given $n$ robots, up to $f$ of which are Byzantine, initially placed arbitrarily on a ring of $n$ nodes, the robots re-position themselves autonomously to reach a configuration where each node has at most one non-Byzantine robot on it and terminate.
\end{definition}
The problem of dispersion on a ``rooted ring'' is a variation of the above problem statement where all $n$ robots are initially located on the same node.

\subsection{Our Contributions}\label{subsec:our-contrib}
In this paper, we introduce the notion of Byzantine robots to the problem of dispersion of mobile robots on graphs. In the context of mobile robots on a graph, previously only the problem of gathering has been extended to the setting where Byzantine robots are present. Those previous results focused on whether gathering can be solved and how many non-Byzantine robots are required to solve the problem. In contrast to those previous results, we focus on the efficiency of solutions instead of just answering the question of whether dispersion can be achieved or not. More specifically, we care about the time and memory efficiency of solutions and seek to minimize them.

Augustine and Moses Jr.~\cite{Augustine:2018} showed a lower bound of $\Omega(\log n)$ bits of memory per robot in order for a deterministic algorithm to achieve dispersion. For a ring, it is easy to see that a lower bound on time complexity is $\Omega(n)$ rounds (since the diameter is $n/2$). Thus, our goal is to develop an algorithm that solves Byzantine dispersion with time and memory complexities that match these lower bounds.

Against this backdrop, we present one procedure and three deterministic algorithms for Byzantine dispersion. Our main result is a time and memory optimal algorithm for Byzantine dispersion on rings.

We first develop an important building block used in subsequent algorithms, the procedure \textsc{Rooted-Ring-Dispersion}. It achieves Byzantine dispersion on a rooted ring in at most $n-1$ rounds and requires each robot to have $O(\log n)$ bits of memory. This procedure allows $k\leq n$ co-located robots (where all non-Byzantine robots are present) with unique IDs taken from any range to achieve Byzantine dispersion even when $n$ is unknown and $f$ can be as large as $k-1$.

Our first and most important contribution is a time and memory optimal algorithm, \textsc{Opt-Ring-Dispersion}, which solves Byzantine dispersion on a ring in $O(n)$ rounds and uses $O(\log n)$ bits of memory per robot. The algorithm relies on the following four assumptions: (i) the ID space of robots is restricted to the range $[1,n]$, (ii) the upper bound on the number of Byzantine robots $f$ is known to the robots, (iii) $f$ is restricted to $f \leq \lfloor (n-4)/17 \rfloor$, and (iv) \textit{follow} primitive holds, i.e., one robot may follow another robot (refer to Section~\ref{sec:opt-ring-alg} for more details).

Our second contribution is a memory optimal algorithm for Byzantine dispersion on a ring, \textsc{Mem-Opt-Ring-Dispersion}, which requires less assumptions--it only requires the ID space of the robots to be restricted to $[1,n]$. \textsc{Mem-Opt-Ring-Dispersion} takes $O(n^2)$ rounds and uses only $O(\log n)$ bits of memory per robot.

Our final contribution is a time optimal algorithm, \textsc{Time-Opt-Ring-Dispersion}, which requires no assumptions at all. It takes $n$ rounds and uses $O(n \log n)$ bits of memory per robot. It should be noted that this algorithm has a very tight running time. Our results are summarized in Table~\ref{table:results}.

\begin{table*}[ht]
	\caption{Our results for Byzantine dispersion of $n$ robots on an $n$ node ring in the presence of at most $f$ Byzantine robots. The first column gives the algorithm's name. The second and third give its time and memory complexity, respectively. The final column lists the assumptions necessary for the algorithm to work. The possible assumptions are: (i) restricted ID space - each robot's unique ID is taken from $[1,n]$, (ii) $f\leq \lfloor (n-4)/17 \rfloor$ known - each robot must know the value of $f$ and $f$ is restricted to values $\leq \lfloor (n-4)/17 \rfloor$, and (iii) \textit{follow} primitive holds, where one robot may follow another.}
	\centering 
		\resizebox{1.0\columnwidth}{!}{%
	\begin{tabular}{|c|c|c|c|}
		\hline
		Algorithm & Running Time & Memory Requirement  & Assumptions Required \\
		 &  (in rounds) & (bits per robot) &   \\
		\hline
		\hline
		\textsc{Mem-Opt-Ring-Dispersion} & $O(n^2)$ & $O(\log n)$ & Restricted ID space \\
		\hline
		\textsc{Time-Opt-Ring-Dispersion} & $n$ & $O(n \log n)$ & None\\
		\hline
		\textsc{Opt-Ring-Dispersion} & $O(n)$ & $O(\log n)$ & Restricted ID space, $f \leq \lfloor (n-4)/17 \rfloor$ known, \textit{follow} \\
		\hline
	\end{tabular}
		}
	\label{table:results}
\end{table*}

\subsection{Technical Difficulties and High-Level Ideas}\label{subsec:tech-diff}
We now highlight some of the key difficulties that make this problem interesting. In doing so, we provide insight into our algorithmic design choices and a high level intuition of our algorithms, though some key details are elaborated upon only in the respective sections.

A fundamental difficulty behind any algorithm for this problem is that Byzantine robots can lie about what they have seen so far. This makes relying on communication between robots risky.  One possibility (\textsc{Time-Opt-Ring-Dispersion}) is to have each robot function independent of the others, with the only real communication between two robots being to see if one of them already settled at the current node. However, this approach requires each robot to develop a way to determine if another robot is Byzantine and remember this information since we do not want two non-Byzantine robots to settle at the same node. Since $O(n)$ robots could be Byzantine, each robot requires $O(n \log n)$ bits of memory.

However, this approach of no communication breaks down when we want each robot to have $o(n \log n)$ bits of memory each. We then require some method for a robot to safely figure out where it can settle, without having to remember the IDs of all the Byzantine robots. In this paper, we have developed a useful primitive (\textsc{Rooted-Ring-Dispersion}) that allows robots with only $O(\log n)$ bits of memory to achieve Byzantine dispersion. The only catch is that all non-Byzantine robots should already be present on the same node. Thus our problem reduces to one of gathering. How best can we gather robots with limited memory?

If we cannot remember all the robots which are Byzantine, is there a technique to gather where each robot does not need to remember information about all robots all the time? One way to do this is to restrict the rounds in which each robot is allowed to move. Recall that a Byzantine robot cannot lie about its ID. If we restrict a robot with ID $x$ to only move in rounds $f(x,1), f(x,2), \ldots$, where $f(x,1)$ is a function known to all robots, and $f(x,i) \neq f(y,j)$ when $x \neq y$, then we provide a memory-lite way for a robot to identify Byzantine robots. By having robots interact with each other in smart ways and guaranteeing that by a certain round, all non-Byzantine robots have gathered, we solve Byzantine dispersion while requiring robots to only have $O(\log n)$ bits of memory each (\textsc{Mem-Opt-Ring-Dispersion}). However, the algorithm takes $O(n^2)$ rounds and requires the ID space of robots to be restricted to $[1,n]$.

The key reason the previous algorithm took so many rounds is that we restricted $f(x,i) \neq f(y,j)$ when $x \neq y$. This was to ensure that a single robot does not need to keep track of multiple Byzantine robots (and the associated IDs) in a given round.
However, when the actual value of $f$, the upper bound on the number of Byzantine robots is known, then we may be a little clever. By looking for a group of robots with at least $2f+1$ robots moving together, a robot needs only $O(\log n)$ bits of memory and it can safely follow the group because a majority of the robots in the group are non-Byzantine robots. This helps us eventually gather all non-Byzantine robots together while allowing multiple robots to move at the same time. However, in order to form this initial group of at least $2f+1$ robots, and ensure that it is the only group that is initially formed requires a bit of work, as seen in \textsc{Opt-Ring-Dispersion}.

\subsection{Related Work}\label{subsec:rel-work}
The problem of dispersing mobile robots on graphs was first introduced by Augustine and Moses Jr. \cite{Augustine:2018} and they provided solutions for various types of graphs including paths, rings, trees, and general graphs. After that, the problem was studied by several papers \cite{Kshemkalyani,KMS2019,KMS2020-ICDCN,KMS2020,tamc19} in various settings to improve the efficiency of the solutions. The best known time-memory efficient algorithm for dispersion of $k\leq n$ robots on an arbitrary $n$-node graph has time complexity of $O(\min\{m, k\Delta\}\log n)$ rounds and $O(\log n)$ bits of memory per robot \cite{KMS2019}, where $\Delta$ is the maximum degree of the graph. The paper \cite{KMS2020} studies dispersion on the grid graph and provides a $O(\sqrt{n})$ time algorithm using $O(\log n)$ memory for each robot, an optimal solution with respect to both memory and time. Randomized algorithms are presented in \cite{tamc19} where random bits are mainly used to break the memory requirement of $\Omega(\log n)$ bits per robot. The papers \cite{KMS2020-ICDCN,KMS2020} study the problem in a slightly different communication model--{\em global communication model}. In this model, a robot can communicate with all other robots in the graph, irrespective of their positions. However, each robots does not have the position information of other robots (unless they are co-located at the same node) as graph nodes are anonymous. Note that this paper and all the other papers mentioned above assume only {\em local communication}, in which a robot can only communicate with other robots present at the same node. The paper \cite{KMS2020-ICDCN} shows that the global communication doesn't help much to speed up the run time of the dispersion algorithms compared to the local communication.

Some other problems that are closely related to dispersion on graphs are exploration, scattering, gathering, etc. The problem of graph exploration has been studied extensively in the literature for specific as well as arbitrary graphs,
e.g., \cite{Bampas:2009,Cohen:2008,Das13,Dereniowski:2015,Fraigniaud:2005,MencPU17}. While some of these exploration algorithms (especially those which fit the current model) can be adapted to solve dispersion (with additional work), they, however, provide inefficient time-memory bounds; a detailed comparison is given in \cite{KMS2019}.
Another problem related to dispersion is {\em scattering} (also known as uniform-deployment) of mobile robots in a graph and is also studied by several papers, e.g., it has been studied for rings \cite{ElorB11,Shibata:2016} and grids \cite{Barriere2009,Poudel18} under different assumptions. Load balancing, where a given
load at the nodes has to be (re-)distributed among several processors (nodes) is also relevant to dispersion where robots can be used to distribute the loads. This problem has been studied in graphs, e.g., \cite{Cybenko:1989,Subramanian:1994}.
Another problem which can be used as a subroutine to solve dispersion (in some specific cases) is {\em gathering} of mobile robots on graphs. Once all robots are gathered at a single node, a DFS/BFS type traversal algorithm can be run to disperse the robots from that node. Particularly, gathering can help solve dispersion easily on paths, rings, grids, or tree like structures, e.g., see Section~\ref{sec:rooted-ring} for the ring.

While it appears that the notion of Byzantine robots is not new to the mobile robots literature in general (e.g.,~\cite{BPT09,ABCTU13,CGKKNOS16}), it appears that its usage in the context of mobile robots on a graph is fairly recent.
To the best of our knowledge, only the problem of gathering has been studied in the context of Byzantine robots in the graph setting. Specifically, Dieudonn\'e et al.~\cite{DPP14} introduced the notion of Byzantine robots to the gathering problem. The paper mainly investigates the possibility and impossibility of gathering of mobile robots on graphs in the presence of Byzantine robots. They present some possibility results under certain assumptions on the minimum number of non-Byzantine robots present. Their solution can be adapted to solve Byzantine dispersion on a ring, but gives a time-memory inefficient solution--$\Omega(n^4)$ rounds and $\Omega(n \log n)$ bits of memory per robot.\footnote{Specifically, by Theorem~3.6 in~\cite{DPP14}, their algorithm runs in $4n^4 \cdot P(n, |\lambda|)$ rounds, where $\lambda$ is the largest label of a non-Byzantine robot and $P(n, |\lambda|)$ is a polynomial in the two variables $n$ and $\lambda$. Since $P(n, |\lambda|) = \Omega(1)$, their algorithm takes at least $\Omega(n^4)$ rounds to gather robots at a node. Subsequently utilizing the procedure~\textsc{Rooted-Ring-Dispersion}, developed in this paper, allows us to achieve Byzantine dispersion without further increasing the asymptotic time complexity. Furthermore, each robot is required to maintain a blacklist of possibly $O(n)$ Byzantine robots in their memory, requiring $O(n \log n)$ bits of memory.}
There are some follow-up papers on Byzantine gathering, mostly focused on the feasibility of the solution, e.g., \cite{BDD16,BDL18,DFLMS19,HTNOI20}, which can be adapted to solve dispersion on a ring, but result in solutions which are time-memory inefficient. Some other remarkable results on faulty robots in different models studies gathering problem \cite{DLM15,TOI18}, but in different model.

There has been previous literature related to faulty robots where faultiness manifests in the form of crash faults. In gathering,~\cite{CDLP16} and~\cite{DFLMS19} look at gathering in the presence of crash faults. For graph exploration, there is not much work done under faulty robots. There is some work on exploration with faulty tokens~\cite{DP12} and exploration on a graph with faulty edges~\cite{CP16}. There is also some work done on graphs that are ``dangerous"~\cite{FKMS12,MS19}, i.e., the nodes/links are faulty in some way.

\subsection{Paper Organization}\label{subsec:paper-org}
In Section~\ref{sec:rooted-ring}, we develop a procedure for the rooted ring, which is used as a building block in subsequent algorithms. In Section~\ref{sec:opt-ring-alg}, we present our main result, a time and memory optimal algorithm for Byzantine dispersion on the ring. In Section~\ref{sec:other-ring-alg}, we present algorithms which are either only time optimal or only memory optimal, but require less assumptions than our main result. Finally, we present our conclusions and future directions of research in Section~\ref{sec:conclusion}.

\section{Building Block}\label{sec:rooted-ring}
In this section, we present a procedure to achieve Byzantine dispersion on the rooted ring, i.e., when all robots start at the same node initially. This procedure also works when some subset of the robots that includes all non-Byzantine robots are co-located initially. This procedure is subsequently used in our algorithms. The procedure, \textsc{Rooted-Ring-Dispersion}, is a simple one that finishes in $O(n)$ rounds and requires each robot to have $O(\log n)$ bits of memory. It can handle any number of Byzantine robots, does not require robots to know the value of $n$, and works even when robots have unique IDs taken from some arbitrarily large range (but still polynomial in $n$). The procedure works as follows.

In the first round, all co-located robots communicate with each other and determine their position in the total order of IDs as follows. In order for a robot to identify its position in the total order, it maintains a $\log n$ bit counter, initialized to $1$, which it increments for every robot it sees in this round with a lower ID.\footnote{Note that it is not necessary for a robot to know the value of $n$ in order to maintain a counter using $\log n$ bits of memory given that the robot's total memory is $c \log n$ bits of memory, where $c$ is a sufficiently large constant.}

Still in the first round, each robot then resets its sense of direction so that all robots have the same sense of clockwise direction.

Now, a robot whose position in the total order is $i$, moves $i-1$ steps in the clockwise direction on the ring (in $i-1$ rounds) and settles down at that $(i-1)^{th}$ node and terminates the algorithm. Thus, Byzantine dispersion is achieved in at most $n-1$ rounds.

\begin{theorem}
Consider an $n$-node ring with $k \leq n$ robots placed on a single node such that all non-Byzantine robots are present among the $k$ robots and at most $f$ of them are Byzantine, $f \leq k-1$. Each robot has a unique ID, $O(\log n)$ bits of memory, and does not have knowledge of the values of $n$ and $f$. Then Byzantine dispersion can be achieved in at most $n-1$ rounds.
\end{theorem}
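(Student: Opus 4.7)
The plan is to verify three claims: (i) all non-Byzantine robots compute consistent, distinct ranks in a total order of IDs; (ii) all non-Byzantine robots adopt the same clockwise direction; and (iii) the resulting settled positions are distinct nodes, reached within $n-1$ rounds using $O(\log n)$ bits per robot. The bulk of the argument rests on claim (i). Since all $k$ robots are co-located at the initial node throughout round $1$, and since a Byzantine robot cannot fake its ID (its identifier resides immutably in its exposed memory, per the model), every non-Byzantine robot scans exactly the same set of $k$ unique IDs. A non-Byzantine robot with ID $x$ therefore correctly computes its rank as one more than the number of co-located IDs strictly less than $x$. Because IDs are unique, distinct non-Byzantine robots obtain distinct ranks in $\{1, \ldots, k\}$.

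Next I would handle direction and placement. The direction-setting rule is deterministic and depends only on the port labels at the current node, with the lower-numbered port designated clockwise, so all non-Byzantine robots, sitting on the same initial node, adopt the same clockwise orientation. The recursive definition (the entry edge becomes counter-clockwise at each new node) then propagates this shared sense of direction along the ring. Consequently, two non-Byzantine robots of ranks $i \neq j$ walk $i-1$ and $j-1$ steps in the same clockwise direction and settle at two different nodes; since $k \leq n$, the nodes $0, 1, \ldots, k-1$ clockwise from the start are pairwise distinct, so no two non-Byzantine robots collide.

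For the time bound, the largest rank is at most $k$, so every non-Byzantine robot has finished moving by round $k-1 \leq n-1$. For memory, each robot needs an $O(\log n)$-bit rank counter, a similarly-sized step counter, and $O(1)$ bits for direction bookkeeping, comfortably within the $O(\log n)$ budget; note that the round-$1$ scan can be simulated by successively reading one co-located robot's exposed ID at a time and updating the counter, so the full list of IDs never needs to be stored. The only genuine obstacle is the possibility that Byzantine robots disrupt the rank computation, but this is precisely ruled out by the ID-immutability guarantee of the model. Any other Byzantine misbehaviour, such as moving to an arbitrary node, settling where it pleases, or corrupting non-ID portions of exposed memory, is harmless because the specification only demands that no two \emph{non-Byzantine} robots co-settle.
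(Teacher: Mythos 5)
Your proposal is correct and follows exactly the argument the paper intends: the paper gives no separate proof of this theorem, relying instead on the description of \textsc{Rooted-Ring-Dispersion} (same-node co-location plus unfakeable IDs gives consistent distinct ranks, the lower-port rule gives a shared clockwise direction, and rank~$i$ walking $i-1$ steps yields distinct nodes in at most $k-1\leq n-1$ rounds). Your write-up simply makes these steps explicit, including the useful observation that the round-1 scan can be simulated with a single counter rather than storing all IDs, which matches the paper's footnote on the counter.
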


\section{Time and Memory Optimal Algorithm}\label{sec:opt-ring-alg}
We now describe an algorithm, \textsc{Opt-Ring-Dispersion}, that achieves Byzantine dispersion on a ring in optimal time ($O(n)$ rounds) using optimal memory ($O(\log n)$ bits) given that robots' unique IDs are restricted to the range $[1,n]$, robots know the value of $f$, and $f \leq \lfloor (n-4)/17 \rfloor$. We also require the following assumption which we call the \textit{follow} primitive. When two robots $A$ and $B$ are co-located on the same node, one robot (say $A$) can follow the movement of the other robot $B$. It is important to note that even if $B$ is a Byzantine robot, $A$ can follow the movement of $B$ when they are co-located.

The algorithm has two \textbf{Phases}. In \textbf{Phase~1}, all non-Byzantine robots gather at a node in $O(n)$ rounds. In \textbf{Phase~2}, these gathered robots perform dispersion in an additional $n$ rounds.

 \textbf{Phase 1:} The first phase of the algorithm is further subdivided into three \textbf{Sub-phases}. We first present intuition for these sub-phases, and then delve into details subsequently. The first sub-phase consists of rounds $1$ to $n$ and is used to aggregate the non-Byzantine robots into at most $f+1$ groups of robots at different nodes. The second sub-phase consists of rounds $n+1$ to $2n+1$ and has these groups move in a way so that a sufficient number of non-Byzantine robots gather together, i.e., at least $f+1$ non-Byzantine robots. The final sub-phase consists of rounds $2n+2$ to $3n+1$ and is used by these at least $f+1$ non-Byzantine robots to move around the ring and collect the remaining non-Byzantine robots.

\textbf{Sub-phase 1:} For the first $n$ rounds, those robots with ID $\in [1,f+1]$ move along the ring in the clockwise direction. Robots with IDs $\notin [1,f+1]$, once they see a robot with an ID $\in [1,f+1]$, follow that robot until the end of round $n$. If multiple robots from $\in [1,f+1]$ are seen at the same time, one is chosen arbitrarily and followed.

\textbf{Sub-phase 2:} For the next $n+1$ rounds, we describe the strategy for each robot depending on who and how many other robots are co-located with them at the start of round $n+1$. Initially, all robots reset their sense of direction so that all co-located robots have the same sense of clockwise and counter-clockwise. Call the robot with ID $1$, $R_1$. The algorithm instructs $R_1$ not to move for these $n+1$ rounds. All robots co-located with $R_1$ follow it.\footnote{Notice that we say that other co-located robots are to follow $R_1$, instead of just staying put. This is to ensure that all robots initially co-located with $R_1$ continue to stay with $R_1$, even if $R_1$ is a Byzantine robot and moves around during the $n+1$ rounds.} The goal of this second sub-phase is to have a sufficient number of non-Byzantine robots find and subsequently follow $R_1$.

We now look at how the other groups of robots not containing $R_1$ move in this sub-phase. If at some node, there are less than four robots, those robots do not move in this sub-phase.

All remaining robots in the ring are present in groups of size 4 or more. Each group is divided into four subgroups $\lbrace G_{LL}, G_{LU}, G_{UL}, G_{UU} \rbrace$, as described below, that move until either the sub-phase ends or they come into contact with robot $R_1$, in which case they subsequently follow $R_1$ until the end of this sub-phase. Each group $G$ is first divided into two subgroups: $\lfloor |G|/2 \rfloor$ of the robots with the lowest IDs form $G_L$ and the remaining form $G_U$. Again $\lfloor |G_L|/2 \rfloor$ of the lowest ID robots of $G_L$ form the subgroup $G_{LL}$ and the remaining form $G_{LU}$. From round $n+1$ to  $2n$, robots in $G_{LU}$ move in the clockwise direction. Robots in $G_{LL}$ do nothing in round $n+1$, but from round $n+2$ to round $2n+1$, robots in $G_{LL}$ move in the clockwise direction. Similarly, $\lfloor |G_U|/2 \rfloor$ of the lowest ID robots of $G_U$ form the group $G_{UL}$ and the remaining robots form $G_{UU}$. They mimic the strategies of $G_{LL}$ and $G_{LU}$ respectively but for the counter-clockwise direction. By the end of this sub-phase, for each of these groups, at least one of the four subgroups comes into contact with $R_1$. 

\textbf{Sub-phase 3:} The third sub-phase, from round $2n+2$ to $3n+1$ sees those robots which were co-located with $R_1$ at the end of round $2n+1$ move clockwise for $n$ rounds. Each robot not co-located with $R_1$ at the end of round $2n+1$, does not move from its node until a group of at least $f+1$ robots arrive at its node and claim to be robots co-located with $R_1$ at the end of round $(2n+1)$. Upon arrival of this group, $X$ does the following. $X$ observes which port the majority of them entered the node through and sets the remaining port as clockwise. $X$ subsequently moves clockwise until the end of round $(3n+1)$. At the end of round $(3n+1)$, all non-Byzantine robots are gathered.\footnote{Possibly some Byzantine robots may also be gathered as well, but the presence of these robots does not cause problems as the subsequent procedure, \textsc{Rooted-Ring-Dispersion} is correct even in the presence of $n-1$ Byzantine robots.}

\textbf{Phase~2:} Finally, in the second phase of the algorithm, the procedure \textsc{Rooted-Ring-Dispersion} is called by these gathered robots and Byzantine dispersion is achieved in an additional $n$ rounds.

\begin{theorem}
Consider an $n$ node ring with $n$ robots initially arbitrarily placed on it. Each robot has a unique ID in $[1,n]$, $O(\log n)$ bits of memory, and knows the value of $f$, the upper bound on Byzantine robots. When $f\leq \lfloor (n-4)/17 \rfloor$, the deterministic algorithm \textsc{Opt-Ring-Dispersion} achieves Byzantine dispersion in $O(n)$ rounds.
\end{theorem}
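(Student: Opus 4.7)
The plan is to prove the theorem in two layers: reduce Byzantine dispersion to Byzantine gathering at a single node by the end of Phase~1 (round $3n+1$), and then invoke \textsc{Rooted-Ring-Dispersion} from Section~\ref{sec:rooted-ring} to finish in at most $n-1$ additional rounds. Once the gathering step is established, the stated complexity follows easily: each robot stores only a constant number of IDs (each $O(\log n)$ bits since IDs lie in $[1,n]$), an $O(\log n)$-bit round counter, plus direction/subgroup flags and the ID of the robot it is currently following, giving $O(\log n)$ bits total; and the round budget of $3n+1 + (n-1) = O(n)$ matches the time bound.

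For Sub-phase~1, I would first observe that at least one leader ID in $[1,f+1]$ belongs to a non-Byzantine robot, since at most $f$ robots are Byzantine. That non-Byzantine leader traverses the entire ring clockwise in $n$ rounds, so any non-leader non-Byzantine robot sitting at its initial node is visited by some leader by round $n$ and attaches to that leader per the algorithm. Consequently, after Sub-phase~1 every non-Byzantine robot is in one of at most $f+1$ groups, each associated with a leader ID and physically located at that leader's starting node at the end of round $n$.

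Sub-phase~2 is the step I expect to be the main obstacle. The target is to show that at the end of round $2n+1$ at least $2f+1$ non-Byzantine robots are co-located with $R_1$. I would prove the key lemma: from each group $G$ not initially containing $R_1$, at least one of the four subgroups $G_{LL}, G_{LU}, G_{UL}, G_{UU}$ meets $R_1$'s node during Sub-phase~2, after which the \textit{follow} primitive keeps it with $R_1$ for the rest of the sub-phase. The case analysis is over the trajectory of $R_1$'s node during the sub-phase: even if $R_1$ is Byzantine and moves, its node shifts by at most one position per round, so the subgroup must intercept or catch $R_1$. The four strategies cover the $(\text{clockwise},\text{counter-clockwise}) \times (\text{offset }0,\text{offset }1)$ combinations, which handles the two possible relative directions (recall that different groups reset their sense of clockwise independently at the start of Sub-phase~2 and hence may disagree) and also the discrete-time parity obstruction where two robots on adjacent nodes moving toward each other would swap positions without sharing a node. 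To count non-Byzantine robots with $R_1$, I would sum the $\approx |G_i|/4$ subgroup sizes over all non-$R_1$ groups, subtract the at most $f$ Byzantine robots (placed adversarially across groups and subgroups), and use $f \le \lfloor (n-4)/17 \rfloor$, equivalently $n - f \ge 16f + 4$, to obtain at least $2f+1$ non-Byzantine robots with $R_1$. Verifying that the constant $17$ is indeed enough in the worst-case adversarial placement of Byzantine robots across groups and subgroups is the most delicate part of the plan.

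Finally, for Sub-phase~3, the group around $R_1$ contains at least $2f+1$ non-Byzantine robots plus at most $f$ Byzantine robots, and it moves clockwise for $n$ rounds, visiting every node of the ring. When it arrives at a still-stationary non-Byzantine robot $X$, its size is at least $f+1$ (triggering $X$'s action); the $\ge 2f+1$ non-Byzantine members all enter $X$'s node through the same port, while at most $f$ entries can come through the opposite port (since there are at most $f$ Byzantine robots in total), so $X$'s majority vote on entry port correctly identifies the clockwise direction and $X$ joins the group. Hence by round $3n+1$ all non-Byzantine robots are co-located with $R_1$, and invoking \textsc{Rooted-Ring-Dispersion} completes Byzantine dispersion within the stated $O(n)$ bound.
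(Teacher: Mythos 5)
Your decomposition is exactly the paper's: Sub-phase~1 yields at most $f+1$ groups, Sub-phase~2 delivers a quorum of non-Byzantine robots to $R_1$ via the four-subgroup interception lemma (two directions against a possibly moving $R_1$, plus a one-round offset within each direction to defeat the swap/parity obstruction), Sub-phase~3 sweeps up the rest using the size-$(f+1)$ and entry-port-majority tests, and \textsc{Rooted-Ring-Dispersion} finishes. The one place you diverge is the quantitative target, and it is precisely the part you flagged as unverified: you aim for $2f+1$ non-Byzantine robots co-located with $R_1$ at the end of Sub-phase~2, but $f \le \lfloor (n-4)/17 \rfloor$ does not deliver that once the losses are charged. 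The paper's accounting is: at most $3f$ non-Byzantine robots sit in groups of size at most $3$ and do not move; each remaining group $G_i$ contributes only $\lfloor \lfloor |G_i|/2 \rfloor /2 \rfloor \ge (|G_i|-6)/4$ robots to the subgroup that reaches $R_1$, so summing over at most $f$ such groups gives at least $(n-3f-6f)/4 = (n-9f)/4$ followers of $R_1$, of which at least $(n-9f)/4 - f = (n-13f)/4$ are non-Byzantine. Requiring this to be at least $2f+1$ forces $f \le (n-4)/21$, not $(n-4)/17$. What the constant $17$ actually buys is $(n-13f)/4 \ge f+1$, and $f+1$ is all that Sub-phase~3 needs: any group of at least $f+1$ robots claiming to have been with $R_1$ must contain a non-Byzantine robot, and at least $f+1$ genuine members enter a waiting robot's node through the correct port while at most $f$ (necessarily Byzantine) robots can enter through the other, so the port majority is still decided correctly. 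Lower your intermediate target from $2f+1$ to $f+1$ and the rest of your plan goes through as written.
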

\begin{proof}
The time and memory complexities are obvious from the algorithm. We now prove correctness, i.e., Byzantine dispersion is achieved in $O(n)$ rounds.

The first $n$ rounds (in Sub-phase~1) ensure that all non-Byzantine robots will be partitioned into at most $f+1$ groups.

There can be at most $f$ groups of 3 robots, hence at most $3f$ non-Byzantine robots do not move from round $n+1$ to round $2n+1$. From the remaining $n-3f$ robots, at least $(n - 6f)/4$ of them should meet and subsequently follow $R_1$ in these $n+1$ rounds, if they act according to the algorithm. This occurs, regardless of whatever $R_1$ chooses to do. We prove this below.

Consider one group of $\geq 4$ robots $G$ at the start of round $n+1$. Since robots in $G_L$ and in $G_U$ are visiting all the nodes of the ring from opposite directions, at some point robots from one of the two groups must encounter $R_1$. Without loss of generality, let the robots in $G_U$ encounter $R_1$. Recall that we group robots in $G_U$ into two subgroups $G_{UL}$ and $G_{UU}$. This is to ensure that at least one of the two subgroups encounters $R_1$ in case $R_1$ moves in the opposite direction to these two groups, since the groups always occupy consecutive nodes in the ring. Hence, at least $\lfloor (\lfloor |G|/2 \rfloor)/2 \rfloor$ robots from $G$ meet $R_1$.

If there is only one such group, then at least $\lfloor (\lfloor |G|/2 \rfloor)/2 \rfloor \geq  \lfloor (\lfloor (n - 3f)/2 \rfloor)/2 \rfloor \geq (n - 3f - 6)/4$ robots meet with $R_1$. However, as the number of such groups increases, the number of robots that eventually meet with $R_1$ decreases. There can be at most $f$ such groups of robots other than the group containing $R_1$. Call the groups $G_1$ to $G_f$ and recall that $\sum_{i=1}^f |G_i| \geq n - 3f$. Let $G'$ be the set of robots that eventually meet up with and subsequently follow $R_1$. We see that $|G'| \geq \sum_{i=1}^f \lfloor (\lfloor |G_i|/2 \rfloor)/2 \rfloor
\geq \sum_{i=1}^f (|G_i| - 6) /4
\geq (n-3f - 6f)/4
= (n-9f)/4$.

Since there can be up to $f$ Byzantine robots in these groups, we know that of the robots in $G'$, at least $(n-9f)/4 - f = (n-13f)/4$ of them are non-Byzantine.

According to our algorithm, this group of at least $(n-13f)/4$ robots move across the ring for the next $n$ rounds to gather all the remaining non-Byzantine robots. Once met, the remaining non-Byzantine robots simply verify that there are at least $f+1$ robots in this group and subsequently follow this group as described in the algorithm. This is the case when $f \leq \lfloor (n-4)/17 \rfloor$.

Thus, at the end of round $3n+1$, all non-Byzantine robots are gathered at the same node. A subsequent call to the procedure \textsc{Rooted-Ring-Dispersion} guarantees that Byzantine dispersion is achieved in an additional $n$ rounds. Since robots keep track of the current round, they terminate at the end of round $4n+1$ only after the desired configuration of robots on nodes is reached.
\end{proof}

\section{Time or Memory Optimal Algorithms with Reduced Assumptions}\label{sec:other-ring-alg}
In the following section, we present two algorithms, a memory optimal algorithm and a time optimal algorithm, which solve Byzantine dispersion on a ring requiring less assumptions than the algorithm presented in Section~\ref{sec:opt-ring-alg}. Both the algorithms tolerate up to $n-1$ Byzantine robots, i.e., $f\leq n-1$.

\subsection{Algorithm with Optimal Memory Complexity}\label{subsec:mem-opt-ring-alg}
We describe below an algorithm, \textsc{Mem-Opt-Ring-Dispersion}, that achieves Byzantine dispersion of $n$ robots, when up to $f$ of them are Byzantine, on an $n$ node ring in $O(n^2)$ rounds requiring each robot to have $O(\log n)$ bits of memory, when robots' unique IDs are restricted to the range $[1,n]$. The algorithm takes $n$ as input i.e., $n$ is known to the robots, but does not need to know $f$. Intuitively, the algorithm first gathers the robots in $n^2$ rounds, then uses an additional $n-1$ rounds to disperse them.

The algorithm works as follows. Define stage $i$ as consisting of the rounds from $(i-1)n + 1$ to $in$. In stage $i$, the robot with ID $i$ moves clockwise for $n$ rounds. Any other robot $x$ does nothing until it comes into contact with robot $i$.\footnote{This could happen at the beginning of the stage, if the two robots are co-located on the same node.} On seeing robot $i$, $x$ communicates with $i$ to see which port $i$ will move through. Subsequently, $x$ moves in that direction until the end of round $in$. After $n$ such stages, all the non-Byzantine robots are gathered at some node. Subsequently, in round $n^2 + 1$, the algorithm calls procedure~\textsc{Rooted-Ring-Dispersion} to achieve Byzantine dispersion in an additional $n-1$ rounds.

\begin{theorem}
Consider an $n$ node ring with $n$ robots initially arbitrarily placed on it. Each robot has a unique ID in $[1,n]$, $O(\log n)$ bits of memory, and there are at most $f$ Byzantine robots, $f \leq n-1$. The deterministic algorithm \textsc{Mem-Opt-Ring-Dispersion} achieves Byzantine dispersion in $O(n^2)$ rounds.
\end{theorem}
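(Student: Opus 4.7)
The time and memory complexities follow directly from the algorithm's structure. The gathering portion consists of $n$ consecutive stages, each lasting $n$ rounds, so it occupies $n^2$ rounds; the subsequent call to \textsc{Rooted-Ring-Dispersion} adds at most $n-1$ more rounds, giving a total of $O(n^2)$. For memory, each robot need only maintain its own ID, the index of the current stage (a counter up to $n$), the direction it has committed to in the current stage, and a flag indicating whether it is currently following the designated mover; all of these fit in $O(\log n)$ bits, since IDs lie in $[1,n]$.

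The main correctness task is to show that by the end of round $n^2$ all non-Byzantine robots reside at a common node. Let $i^*$ denote the smallest ID among the non-Byzantine robots; since IDs cannot be spoofed, no other robot in the system can impersonate $i^*$. During stage $i^*$, robot $i^*$ is by definition non-Byzantine and therefore faithfully moves one step clockwise in each of the $n$ rounds of that stage, so it visits every node of the ring. The plan is to consider an arbitrary other non-Byzantine robot $x$: at the start of stage $i^*$, $x$ is idle at some node, and at some round $t \in [(i^*-1)n+1,\, i^*n]$ robot $i^*$ arrives at $x$'s current node. From that moment on, $x$ reads the port choice that $i^*$ has posted in its exposed memory (which, by the communication model, is identical for every co-located robot) and moves in that direction for the remainder of the stage, so $x$ stays co-located with $i^*$ until round $i^*n$. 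Hence all non-Byzantine robots are together at a single node at the end of stage $i^*$.

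The step I expect to be the main obstacle is verifying that this gathered configuration is preserved through the remaining stages $i^*+1, \ldots, n$, in which the designated mover may itself be Byzantine. I would argue invariantly: suppose that at the start of stage $j > i^*$ all non-Byzantine robots sit at a common node $v$. If robot $j$ never visits $v$ during stage $j$, then by the algorithm no non-Byzantine robot at $v$ ever moves, and the group remains at $v$. If robot $j$ is present at $v$ at some round of the stage, then every non-Byzantine robot at $v$ reads the same port announcement from the single exposed memory of $j$ (so $j$ cannot selectively whisper different directions to different robots), commits to the same direction, and henceforth moves in lockstep, so the group again stays together for the rest of the stage. The invariant is therefore preserved stage by stage, and at the end of stage $n$ all non-Byzantine robots are co-located at a single node. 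A subsequent invocation of \textsc{Rooted-Ring-Dispersion} (whose correctness does not rely on any bound on the number of Byzantine robots beyond $f \leq k-1$) then yields Byzantine dispersion in $n-1$ further rounds, completing the proof.
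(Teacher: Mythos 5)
Your proposal is correct and follows essentially the same route as the paper: identify a stage whose designated mover is guaranteed non-Byzantine (you take the smallest non-Byzantine ID; the paper notes some ID in $[1,f+1]$ must be non-Byzantine), observe that this robot's clockwise sweep gathers all non-Byzantine robots, argue the gathered group stays together through the remaining stages, and finish with \textsc{Rooted-Ring-Dispersion}. Your explicit invariant argument for stages after $i^*$ (the single exposed-memory port announcement forcing all co-located non-Byzantine robots to move in lockstep even under a Byzantine mover) is a welcome elaboration of a step the paper asserts in one sentence.
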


\begin{proof}
It is easy to see that all non-Byzantine robots terminate in $n^2 + n -1 = O(n^2)$ rounds. It is also easy to see that no robot requires more than $O(\log n)$ bits in order to maintain a round counter and also to identify its position in the total order of IDs of robots co-located with it. In order to argue correctness, we first show that at the end of stage $n$, all non-Byzantine robots are gathered together.

Among the robots with IDs in $[1,f+1]$, there is at least one non-Byzantine robot, say with ID $k$. During stage $k$, robot $k$ ensures that all non-Byzantine robots are gathered. Once gathered, non-Byzantine robots will always move together in subsequent stages and will remain gathered.

Once they are gathered, procedure~\textsc{Rooted-Ring-Dispersion} ensures that Byzantine dispersion is achieved.
\end{proof}

It is clear that if the value of $f$ is known to the robots ahead of time, they can run the algorithm for exactly $f+1$ stages and subsequently disperse. Thus we have the following corollary.

\begin{corollary}
Consider an $n$ node ring with $n$ robots initially arbitrarily placed on it. Each robot has a unique ID in $[1,n]$, $O(\log n)$ bits of memory, and there are at most $f$ Byzantine robots, $f \leq n-1$. When the value of $f$ is known to the robots, there exists a deterministic algorithm that achieves Byzantine dispersion in $O(fn)$ rounds.
\end{corollary}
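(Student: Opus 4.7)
The plan is to obtain the corollary by simply truncating the gathering phase of \textsc{Mem-Opt-Ring-Dispersion} from $n$ stages down to $f+1$ stages, then appending a call to \textsc{Rooted-Ring-Dispersion}. Since $f$ is now known to every robot, every robot can compute the round at which the gathering phase should end and the dispersion primitive should begin, so the two phases can be cleanly stitched together using only a $\log n$-bit round counter.

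The first step is a pigeonhole argument: the $f+1$ robots whose IDs lie in $[1,f+1]$ cannot all be Byzantine, so at least one such ID $k \in [1,f+1]$ belongs to a non-Byzantine robot. The second step is to quote the correctness argument already used for \textsc{Mem-Opt-Ring-Dispersion}: during stage $k$ (rounds $(k-1)n+1$ through $kn$) the non-Byzantine robot with ID $k$ walks honestly all the way around the ring, so every other non-Byzantine robot encounters it and thereafter travels with it. Hence by the end of stage $k$, and in particular by the end of stage $f+1$, all non-Byzantine robots are co-located at a single node and remain so.

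The third step is to invoke \textsc{Rooted-Ring-Dispersion} starting at round $(f+1)n + 1$. The building-block theorem from Section~\ref{sec:rooted-ring} applies because (i) all non-Byzantine robots are gathered on one node, (ii) there can be at most $f \leq n-1$ Byzantine robots among those co-located, and (iii) each robot still holds its ID and has $O(\log n)$ bits available; it finishes in at most $n-1$ further rounds. Adding up, the total running time is $(f+1)n + (n-1) = O(fn)$ (using $f \geq 1$; the $f=0$ case collapses to the non-faulty setting which already runs in $O(n)$ rounds). The memory bound is inherited unchanged from the two components it is built from.

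The main obstacle I anticipate is essentially a bookkeeping one rather than a conceptual one: verifying that non-Byzantine robots can correctly agree on when the gathering phase ends, given that the specific identity of the honest $k \in [1,f+1]$ is never learned. This is resolved by observing that knowledge of $k$ is not needed for synchronization — only its existence is needed for correctness of gathering, while each non-Byzantine robot independently knows $n$ and $f$, deterministically tracks the round counter, and therefore switches to \textsc{Rooted-Ring-Dispersion} at round $(f+1)n+1$ without any communication. One minor sanity check I would perform is that shortening the gathering phase does not prevent a non-Byzantine robot that first becomes the "follower" in some late stage $i \leq f+1$ from keeping up with the group through round $(f+1)n$; this is immediate from the fact that following is defined per-stage and, once absorbed into the gathered set, a non-Byzantine robot simply continues following in every subsequent stage.
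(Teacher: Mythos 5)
Your proposal is correct and matches the paper's own justification: the paper derives this corollary by exactly the same truncation of \textsc{Mem-Opt-Ring-Dispersion} to $f+1$ stages, relying on the same pigeonhole fact that some ID in $[1,f+1]$ belongs to a non-Byzantine robot whose stage gathers everyone, followed by \textsc{Rooted-Ring-Dispersion}. Your added remarks on round-counter synchronization and the $f=0$ edge case are sound elaborations of details the paper leaves implicit.
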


It is possible to remove the ID space restriction if an upper bound, say $U$, on the range of IDs is known. By having robots execute $U$ phases of gathering, each taking $n$ rounds, it is guaranteed that all robots will gather in $O(Un)$ rounds, after which procedure~\textsc{Rooted-Ring-Dispersion} may be called.

\begin{corollary}
Consider an $n$ node ring with $n$ robots initially arbitrarily placed on it. Each robot has a unique ID in the range $[1,U]$, $O(\log n)$ bits of memory, and there are at most $f$ Byzantine robots, $f \leq n-1$. There exists a deterministic algorithm that achieves Byzantine dispersion in $O(Un)$ rounds.
\end{corollary}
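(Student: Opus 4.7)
The plan is to run the natural generalization of \textsc{Mem-Opt-Ring-Dispersion} in which there are $U$ gathering stages instead of $n$. Stage $i\in[1,U]$ occupies rounds $(i-1)n+1$ through $in$. In stage $i$, if some robot on the ring has ID equal to $i$, that robot moves clockwise every round of the stage; every other robot $x$ sits at its current node until it meets the robot with ID $i$, after which $x$ copies the port choices of robot $i$ for the remainder of the stage. If no robot has ID $i$, nothing happens during the stage. After the $U$ gathering stages, in round $Un+1$ the robots invoke \textsc{Rooted-Ring-Dispersion} from Section~\ref{sec:rooted-ring}, which finishes Byzantine dispersion in at most $n-1$ further rounds, giving a total of $Un+(n-1)=O(Un)$ rounds. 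Note that $f$ need not be known.

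The first step of the correctness argument is to show that some stage $k\in[1,U]$ leaves all non-Byzantine robots co-located. Since $f\le n-1$, at least one non-Byzantine robot exists; let $k$ be its ID. Because Byzantine robots cannot fake IDs, the ``mover'' in stage $k$ really is this non-Byzantine robot, and it honestly performs a full clockwise traversal of the ring over the $n$ rounds of the stage, visiting every node. At the beginning of stage $k$, every other non-Byzantine robot is stationary (it resets to the ``wait for the mover'' behaviour at each stage boundary and has not yet seen robot $k$), so it is picked up when the mover passes its node and then follows the mover to the same terminal node by round $kn$.

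The second step is to show that once gathered, the non-Byzantine robots remain co-located for the rest of stages $k+1,\dots,U$. Fix any such stage $j$, and let $v$ be the common node of the non-Byzantine group at the start of the stage. One of three cases occurs. If no robot has ID $j$, no robot moves and the group stays at $v$. If the robot with ID $j$ is one of the gathered non-Byzantine robots, then all co-located robots (Byzantine or not) see the same port choice each round and advance together as one block. If the robot with ID $j$ is Byzantine and located outside the group, then the non-Byzantine robots remain idle at $v$ until (and unless) this Byzantine robot arrives at $v$; if it ever does, all gathered non-Byzantine robots detect it simultaneously and begin copying its moves in lockstep, so they again travel as a single block for the remainder of the stage. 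In every case the non-Byzantine robots are co-located at the end of stage $j$, so at round $Un+1$ the preconditions of \textsc{Rooted-Ring-Dispersion} are satisfied and that procedure completes the task.

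The main obstacle, though not deep, is a bookkeeping one about memory: each robot must maintain a round/stage index that can grow as large as $Un$, and the ID itself already takes $\log U$ bits. Under the standing convention of the paper that $U=\mathrm{poly}(n)$ (the corollary is stated with the $O(\log n)$ memory bound), one has $\log(Un)=O(\log n)$, so the counter and ID fit. The remaining local state---the small amount needed to implement the ``wait, then copy the mover's port'' rule and to run \textsc{Rooted-Ring-Dispersion}---is identical to that of \textsc{Mem-Opt-Ring-Dispersion} and has already been verified to fit in $O(\log n)$ bits. Putting these pieces together gives the claimed deterministic $O(Un)$-round, $O(\log n)$-memory algorithm for Byzantine dispersion with IDs in $[1,U]$.
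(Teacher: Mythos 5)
Your proposal is correct and matches the paper's approach exactly: the paper justifies this corollary with the one-line remark that one runs $U$ gathering stages of $n$ rounds each (the direct generalization of \textsc{Mem-Opt-Ring-Dispersion}) followed by \textsc{Rooted-Ring-Dispersion}, and your correctness argument is the same one used for the $O(n^2)$ theorem, transplanted to $U$ stages. Your added bookkeeping observation that the stage counter and IDs still fit in $O(\log n)$ bits because $U=\mathrm{poly}(n)$ is consistent with the paper's standing assumption on the ID space.
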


\subsection{Time Optimal Algorithm}\label{subsec:time-opt-ring-alg}
We now describe an algorithm, \textsc{Time-Opt-Ring-Dispersion}, that achieves Byzantine dispersion of $n$ robots, when up to $f\leq n-1$ of them are Byzantine, on an $n$ node ring in $O(n)$ rounds using $O(n \log n)$ bits of memory per robot. Again the robots need not know the number of Byzantine robots in the system.

Each robot, having $O(n\log n)$ bits of memory, can remember all the $n$ robots. It helps to detect Byzantine robot, particularly when a robot deviates the algorithm. Intuitively, this algorithm is easy to explain, though the details are involved. Each robot $r$ moves clockwise in the ring until it finds a node to be settled down at.
There are two conditions that $r$ checks at a node $v$ before deciding to settle down. One condition is that among the robots on $v$ that claim to already be settled there, there exists a robot that is not known by $r$ to be Byzantine.
The second condition is that among all the robots currently at the node $v$, there exists a robot with ID lower than $r$'s that intends to settle and is not known by $r$ to be Byzantine. If either check succeeds, then $r$ does not settle down at $v$. Else $r$ settles down.
The checking on the second condition involves non-trivial computations. The detailed algorithm is given below.

Each robot $r$ maintains an array $A_r$ of size $n+1$, where $A_r[k]$ contains the ID(s) of the settled robot(s) that $r$ encountered in round $k\geq 1$. Note that the total number of settled robots in some node can be more than one, as Byzantine robots may settle with a non-Byzantine robot.  $A_r[0]$ is used to represent whether $r$ is settled on the current node or not. Initially $A_r[0]=0$. Robot $r$ sets $A_r[0] = 1$ when on some node $v$ in order to claim that it is settled on $v$.

Let the robot $r$ impose the local naming convention $\{v_1, v_2,...,v_n\}$ on the set of all nodes it may see in the ring, where the node it is initially placed on is $v_1$, then the next node it moves to is $v_2$, and so on. In any round $k$, let $G'^k_r=\{s_1,s_2,...,s_p\}$ be the set of $p$ already settled robots at $v_k$ at the beginning of round $k$. Let $G^k_r$ be the group of robots on node $v_k$ in round $k$, excluding those robots in $G'^k_r$. If there is no settled robot at $v_k$ by the start of round $k$, then $G'^k_r$ is empty. If there are no robots on $v_k$ in round $k$, excluding those in $G'^k_r$, then $G^k_r$ is empty.

In any round $k$, let $M^k_r$ be the the set of robots whose IDs were written in $A_r[1], A_r[2],...,A_r[k-1]$. Define $B^k_r = M^k_r \bigcap (G'^k_r \bigcup G^k_r)$ and let its complement be denoted by $B^{Ck}_r$. Intuitively, $B^k_r$ represents robots that $r$ has identified as acting in a Byzantine manner by claiming to settle at a previous node and now are present at node $v_k$.

We will shortly describe how robot $r$ determines whether it will settle down at a node $v_k$ in round $k$. For now, it is sufficient to note that the decision of $r$ to settle down at node $v_k$ in round $k$ is a result of a computation whose input is $r$'s memory and the memories of other robots co-located with it on $v_k$ at that time. This is in fact true of all the robots in $G^k_r$ that may possibly decide to settle down at the node. All these memories can be read by all robots co-located at the node in that round. Thus, robot $r$ can calculate the set $S^k_r \subset G^k_r \bigcup G'^k_r$ of robots that will settle down at node $v_k$ and add it to $A_r[k]$.

We now describe the procedure for $r$ in each round $1 \leq k \leq n$ on node $v_k$ until $r$ settles down.
\begin{enumerate}
\item\label{item:line 1} Initialize $S^k_r= G'^k_r\setminus B^k_r$. Robot $r$ performs the following check. If $S^k_r$ is not empty, then $r$ does not settle at $v_k$.

\item \label{item:line 2} Robot $r$ performs the following calculation to iteratively determine the subset of robots from $G^k_r$ that may decide to settle at the node. Consider all robots $s$ such that $s \in G^k_r$. Order these robots in ascending order of their ID. For each robot $s$ in this list from smallest ID to largest ID, do the following. (Note that $G^k_s = G^k_r$ and $G'^k_s = G'^k_r$.)
\begin{enumerate}
    \item If there does not exist a robot $t$ such that $t \in G'^k_s$ and $t \notin B^k_s$, then proceed to the next step. Else, move to the next robot in the list.
    \item If there does not exist a robot $t$ such that $t \in S^k_r$ and $t \notin B^k_s$, then add $s$ to $S^k_r$. 
\end{enumerate}
Now $r$ performs the following check. If there exists a robot $s$ such that the ID of $s$ is less than that of $r$, $s \in S^k_r$, and $s \notin B^k_r$, then $r$ does not settle at $v_k$.

\item \label{item:line 3} If neither of the above two checks are satisfied, then $r$ settles at $v_k$.

\item \label{item:line 4} If $r$ does not settle at $v_k$, it writes the robot IDs of $B^{Ck}_r \bigcap S^k_r$ in $A_r[k]$ and moves clockwise through an edge.
\end{enumerate}

Once $r$ settles down at a node, it waits until the end of round $n$ and then terminates.\footnote{If $r$ terminates prior to the end of round $n$, it becomes invisible to other robots. Thus, there is the risk of another non-Byzantine robot settling at the same node as $r$ if $r$ terminates early.} The following theorem captures the properties of the algorithm.

\begin{theorem}\label{theorem:nrounds}
Consider an $n$ node ring with $n$ robots initially arbitrarily placed on it, up to $f \leq n-1$ of which are Byzantine in nature. Each robot has a unique ID and knows the value of $n$. Algorithm \textsc{Time-Opt-Ring-Dispersion} solves Byzantine dispersion in $n$ rounds and requires each robot to have $O(n \log n)$ bits of memory.
\end{theorem}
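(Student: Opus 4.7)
The plan is to prove three things: the time bound of $n$ rounds, the memory bound of $O(n \log n)$ bits per robot, and correctness of Byzantine dispersion (every non-Byzantine robot settles and no two non-Byzantine robots settle at the same node). The time bound is immediate from the algorithm description, since each robot performs a fixed sequence of checks and at most one clockwise move per round, and simply waits until the end of round $n$ after settling. For memory, the key observation is that Step~4 writes only IDs in $B^{Ck}_r \cap S^k_r = S^k_r \setminus M^k_r$, so any ID appears in $A_r$ in at most one round; hence the total content of $A_r$ is at most $n$ IDs, giving $O(n\log n)$ bits in total, with the round counter and other bookkeeping absorbed into the constant.

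Before the correctness arguments, I would record one geometric observation that drives everything: every non-Byzantine robot advances by exactly one edge clockwise per round until it settles. Consequently, any two non-Byzantine robots starting at different nodes never occupy the same node while both are still moving, and each non-Byzantine robot visits each physical node at most once during the $n$ rounds. Moreover, if two non-Byzantine robots $r_1$ and $r_2$ start at the same node (a \emph{pod}), they are perpetually co-located until one settles, they read the same exposed memories every round, make the same write decisions in Step~4, and therefore $A_{r_1}$ and $A_{r_2}$ stay identical until one of them settles.

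For uniqueness, I would assume toward contradiction that non-Byzantine $r_1, r_2$ both settle at a node $u$, in rounds $k_1 \leq k_2$. If $k_1 = k_2 = k$, both start at the same node; using the identical-$A$ fact, $B^k_{r_1} = B^k_{r_2}$, so the Step~2 simulation performed by $r_2$ on $r_1$ coincides with $r_1$'s actual decision, placing $r_1 \in S^k_{r_2}$; then the final check of Step~2 applied to the larger-ID robot $r_2$ with witness $r_1$ forces $r_2$ to defer, contradiction. If $k_1 < k_2$ and they started at the same node, the single-visit observation rules out $r_2$ ever revisiting $u$ within $n$ rounds. If $k_1 < k_2$ and they started at different nodes, $r_2$ never encountered $r_1$ before round $k_2$ (they are always at distance equal to the difference of their starts while $r_1$ moves, and $r_2$ does not reach $u$ until round $k_2$), so $r_1 \notin M^{k_2}_{r_2}$; thus $r_1 \in G'^{k_2}_{r_2} \setminus B^{k_2}_{r_2}$ is nonempty, Step~1 blocks, and $r_2$ does not settle, contradiction.

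For termination I would prove that any other robot $s$ (Byzantine or not) blocks $r$ at most once across $r$'s traversal. A blocking event at $v_k$ is either a Step~1 witness ($s \in G'^k_r \setminus B^k_r \subseteq S^k_r$) or a Step~2 final-check witness ($s \in S^k_r \setminus B^k_r$); in either case $s \in B^{Ck}_r \cap S^k_r$, so Step~4 inserts $s$ into $A_r[k]$. From round $k+1$ onward, $s \in M_r$, hence $s \in B^{k'}_r$ at every later co-location, so $s$ can no longer block $r$. With only $n-1$ other robots and $n$ distinct nodes visited by $r$, some node is free of blockers and $r$ settles there, completing dispersion by the end of round $n$. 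The main obstacle I anticipate is justifying that the simulated settlement set $S^k_r$ computed by observer $r$ accurately reflects what other co-located robots will actually do; the essential point that I would emphasize is that all robots at a node in a given round read the same exposed memories, so $S^k_r$ is a deterministic function of those shared inputs and in particular agrees with $S^k_s$ for every non-Byzantine $s$ co-located with $r$, regardless of how Byzantine robots chose to populate their $A$ arrays.
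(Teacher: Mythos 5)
Your overall architecture matches the paper's: blockers are written into $A_r$ at most once each, so pigeonhole over the $n$ visited nodes forces settlement by round $n$; and simulation accuracy (all co-located robots compute the same $S^k$ from the same exposed memories) prevents two non-Byzantine robots from settling together. However, your ``geometric observation'' contains a genuine error. You assert that any two non-Byzantine robots starting at different nodes remain at constant distance and never meet while both are moving, and you use this twice: to conclude in the $k_1=k_2$ case that the two robots must have started at the same node (so that their $A$-arrays are identical), and in the $k_1<k_2$ different-start case to conclude that $r_2$ never encountered $r_1$ before round $k_2$. This premise is false under the paper's model: each robot's ``clockwise'' is defined locally by port numbers at its starting node, the model explicitly states that two robots may not agree on this sense, and \textsc{Time-Opt-Ring-Dispersion} (unlike \textsc{Opt-Ring-Dispersion}) contains no direction-reset step. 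Two non-Byzantine robots may therefore traverse the ring in opposite physical directions, meet while both are moving, and be co-located in the same round despite distinct starting nodes --- in which case your ``identical $A$-arrays'' argument for the $k_1=k_2$ case does not apply.

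The damage is local and repairable with a tool you already name. The only fact needed in every case of the uniqueness argument is that a non-Byzantine robot $s$ that has not settled at the current node is never in $B^k_r$ for non-Byzantine $r$; this follows from simulation accuracy alone ($r$ writes $s$ into $A_r[j]$ only if the shared deterministic computation places $s$ in $S^j_r$, which for non-Byzantine $s$ happens only if $s$ actually settles at $v_j$ and then never moves again), combined with the single-robot fact that $r$ revisits a node only after $n$ further rounds. This is exactly the paper's Observation and its Lemma~2: if $s$ is already settled at $v_k$, then $s\notin B^k_r$ and Step~1 blocks $r$; if both are unsettled and $s$ has the smaller ID and settles, then $s\in S^k_r\setminus B^k_r$ and the Step~2 check blocks $r$ --- no claim about relative starting positions or travel directions is needed. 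You should excise the two-robot constant-distance claim and rewrite the $k_1=k_2$ and different-start cases along these lines; the termination and memory arguments are unaffected, as they rely only on a single robot's consistency of direction.
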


In order to prove the theorem, we first make an observation and prove a few useful lemmas.
\begin{observation}
When $r$ is a non-Byzantine robot, for $1 \leq k \leq n$, the robots in $B^k_r$ are Byzantine.
\end{observation}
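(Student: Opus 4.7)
The plan is to prove the observation by direct argument. Fix a non-Byzantine robot $r$, a round $k \in [1,n]$, and a robot $s \in B^k_r$; I will show that $s$ must have deviated from the algorithm, hence must be Byzantine. First I would unpack the definition: $s \in B^k_r$ means $s \in M^k_r$, so $s$'s ID was written into $A_r[j]$ for some $j < k$, and $s \in G'^k_r \cup G^k_r$, so $s$ is present at $v_k$ in round $k$. A key preliminary remark is that the nodes $v_1,\ldots,v_n$ visited by $r$ in its first $n$ rounds are all distinct, since $r$ moves one edge clockwise per round on an $n$-node ring; in particular $v_j \neq v_k$.

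Next, I would exploit the fact that $s$ is written into $A_r[j]$ only in step (4) of the algorithm, when $r$ does not settle at $v_j$ and $s \in B^{Cj}_r \cap S^j_r$. Thus $r$'s own computation placed $s$ into $S^j_r$. The heart of the proof is the following claim: if $s$ is non-Byzantine, then $s$ actually settles at $v_j$ (or was already settled there) and therefore remains at $v_j$ for all later rounds, contradicting $s \in G'^k_r \cup G^k_r$ since $v_j \neq v_k$.

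To prove the claim, I would split on how $s$ entered $S^j_r$. If $s \in G'^j_r$, then $s$'s exposed bit $A_s[0]=1$; a non-Byzantine $s$ sets this bit only via step (3) when it genuinely commits to $v_j$, and after committing the algorithm instructs $s$ to stay until round $n$, so $s$ remains at $v_j$. If instead $s \in G^j_r$ and was inserted into $S^j_r$ by the iterative procedure in step (2), I would observe that the computation of $S^j_r$ depends only on the IDs and the exposed memories of the robots co-located at $v_j$ (which determine each $B^j_{(\cdot)}$). Since $s$ runs the identical computation on the identical inputs, $s$ obtains $S^j_s = S^j_r$ with the robots inserted in the same order. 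Moreover, $s$'s own memory is truthful, so $B^j_s$ is the ``true'' set it carries. The step 2(b) condition that triggered the insertion of $s$ into $S^j_r$ — that no robot already in $S^j_r$ lies outside $B^j_s$ — coincides exactly with $s$'s own settle-check at the end of step (2), because the robots already in $S^j_r$ at $s$'s insertion time are precisely the lower-ID members of $S^j_s$. Hence $s$ settles at $v_j$, as claimed.

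The main obstacle I expect is precisely the alignment between $r$'s prediction of $s$'s behavior and $s$'s actual behavior, since the per-robot checks in step (2) use $B^j_{(\cdot)}$ of whichever robot is being evaluated, rather than any single global Byzantine set. The argument above handles this by exploiting the deterministic, memory-driven nature of the computation: every observer at $v_j$ derives the same $S^j$, and when the observer being predicted is non-Byzantine its memory is accurate, so $r$'s act of recording $s$ as settled is faithful to what $s$ truly does. Combining the distinctness of $v_1,\ldots,v_n$ with this faithfulness closes the contradiction and establishes that $s$ must be Byzantine.
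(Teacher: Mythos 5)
Your proof is correct, and its core coincides with the paper's: a non-Byzantine robot that settles never moves again, while $r$ moves one edge clockwise per round and hence visits $n$ distinct nodes in its first $n$ rounds, so re-encountering a recorded-as-settled robot at a different node within $n$ rounds convicts it of being Byzantine. The difference lies in what you prove versus what you assume. The paper's proof is two sentences precisely because it takes for granted the step you call \emph{faithfulness} --- that whenever a non-Byzantine $s$ is written into $A_r[j]$, $s$ really does settle at $v_j$. The paper treats this as part of the algorithm's specification (when describing the algorithm it asserts that the settling decision is a deterministic function of the co-located robots' exposed memories, so ``robot $r$ can calculate the set $S^k_r$ of robots that will settle down''), whereas you actually prove it, splitting on $s \in G'^j_r$ versus $s \in G^j_r$ and arguing that $r$'s simulation of step~2 agrees with $s$'s own execution. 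That extra work is sound, with one small imprecision worth noting: it is not literally true that every observer computes the same set $S^j$, since the \emph{initial} contents $G'^j_r \setminus B^j_r$ depend on each observer's own $B^j$. This does not break your argument, because condition 2(a) guarantees $G'^j_s \subseteq B^j_s$, so initial members (all lying in $G'^j$) can never trigger $s$'s check in 2(b) or its final settle check; only the members added from $G^j$ matter, and those are observer-independent. In short, both proofs rest on the same two facts; yours buys self-containedness by verifying the prediction step, while the paper buys brevity by delegating that step to the algorithm's description.
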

\begin{proof}
Let robot $r$ encounter a robot $s$ in some round $1 \leq k' < k$ that decides to settle in that round. If $s$ is a non-Byzantine robot, then $s$ never changes its position on the ring and $r$ can only encounter $s$ again in some round $k' + n > n$ (since $r$ only moves in one direction in the ring). Hence, if $r$ finds $s$ in another node, i.e., at some round after $k'$ and before $k'+ n$, then $s$ must be a Byzantine robot. In other words, if $s \in M^k_r \bigcap (G'^k_r \bigcup G^k_r)$ then $r$ can identify $s$ as a Byzantine robot.
\end{proof}

\begin{lemma}\label{lemma:non-Byzantine}
If $r$ does not settle down at node $v_k$ in round $k$, then there is at least one robot that $r$ currently does not consider Byzantine that is written in $A_r[k]$.
\end{lemma}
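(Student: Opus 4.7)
First I would recall that Step~\ref{item:line 4} of the algorithm writes exactly the IDs of the robots in $B^{Ck}_r \cap S^k_r$ into $A_r[k]$ whenever $r$ does not settle at $v_k$. Consequently, the lemma reduces to showing that this intersection is non-empty under the hypothesis that $r$ does not settle. My plan is to split into the two possible reasons $r$ chose not to settle, namely the check at the end of Step~\ref{item:line 1} and the check at the end of Step~\ref{item:line 2}, and in each case exhibit a witness robot lying in $B^{Ck}_r \cap S^k_r$.

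For the Step~\ref{item:line 1} trigger, the hypothesis is that $S^k_r$, initialized as $G'^k_r \setminus B^k_r$, was non-empty right after initialization. I would pick any $t$ in this initial set. By construction $t \notin B^k_r$, hence $t \in B^{Ck}_r$. Since the iterative construction in Step~\ref{item:line 2} only ever adds elements to $S^k_r$ and never removes them, $t$ is still a member of $S^k_r$ when Step~\ref{item:line 4} is executed, so $t \in B^{Ck}_r \cap S^k_r$.

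For the Step~\ref{item:line 2} trigger, the post-loop check itself directly furnishes the required witness: it fires precisely because there exists some $s$ with ID smaller than that of $r$ such that $s \in S^k_r$ and $s \notin B^k_r$. Thus $s \in B^{Ck}_r$ and $s \in B^{Ck}_r \cap S^k_r$.

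These two cases exhaust the ways in which $r$ can decline to settle at $v_k$, so the intersection $B^{Ck}_r \cap S^k_r$ is non-empty in both scenarios, and $A_r[k]$ therefore contains at least one robot that $r$ does not currently consider Byzantine. I do not anticipate any real obstacle here; the lemma is essentially a bookkeeping verification that the set written into $A_r[k]$ in Step~\ref{item:line 4} is consistent with the criteria that triggered the refusal to settle, and the only mild point to be careful about is observing that $S^k_r$ only grows (never shrinks) across the loop in Step~\ref{item:line 2}, so witnesses established in Step~\ref{item:line 1} survive to Step~\ref{item:line 4}.
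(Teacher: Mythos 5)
Your proof is correct and follows essentially the same case split as the paper's own argument (the two checks in Steps~\ref{item:line 1} and~\ref{item:line 2} each force a witness $s \notin B^k_r$ in $S^k_r$, hence in the set $B^{Ck}_r \cap S^k_r$ written to $A_r[k]$). You are somewhat more explicit than the paper in verifying that the Step~\ref{item:line 1} witness survives into the final $S^k_r$ because the loop only adds elements, but this is a refinement of the same argument, not a different route.
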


\begin{proof}
Recall that in round $k$, a robot $s$ that is co-located with $r$ is considered by $r$ to be non-Byzantine iff $s \notin B^k_r$.

Robot $r$ does not settle at node $v_k$ if either of the two checks in Line~\ref{item:line 1} and Line~\ref{item:line 2} succeed. Both checks require that a robot $s \neq r$ such that $s \notin B^k_r$ chooses to settle at $v_k$. Hence, if either of the two checks succeed, then by definition $r$ considers at least one of the robots that settled to be non-Byzantine.
\end{proof}

Note that a robot that appears non-Byzantine to $r$ may in fact be a Byzantine robot. In case $r$ adds more than one robot to $A_r[k]$ for a given round $k$, we desire that at most one of the robots in $A_r[k]$ may be non-Byzantine. It may in fact be the case that all of the robots in $A_r[k]$ are Byzantine. However, it should not be the case that more than one robot in $A_r[k]$ is non-Byzantine, i.e., more than one non-Byzantine robot settles at the same node. The following lemma shows that this is indeed the case.

\begin{lemma}\label{lemma:singlesettles}
No two non-Byzantine robots settle at the same node $v_k$ in any round $k$.
\end{lemma}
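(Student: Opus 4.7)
The plan is to split into two cases according to whether the two hypothetical non-Byzantine robots $r_1$ and $r_2$ that settle at the same global node $v$ do so in the same round or in different rounds.

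Case I (different rounds, WLOG $k_1 < k_2$). Since every robot moves clockwise at the same speed until it settles, I would first observe with a short modular-arithmetic argument on positions that $r_1$ and $r_2$ are never co-located before round $k_1$ (they remain at a fixed clockwise offset), and after $r_1$ settles at $v$, $r_2$ keeps moving and first reaches $v$ precisely in round $k_2$. Hence $r_1$'s ID is never written into any $A_{r_2}[k']$ for $k' < k_2$, so $r_1 \notin M^{k_2}_{r_2}$ and thus $r_1 \notin B^{k_2}_{r_2}$. Because $r_1$ is physically settled at $v_{k_2}$ from $r_2$'s viewpoint, $r_1 \in G'^{k_2}_{r_2} \setminus B^{k_2}_{r_2}$, a non-empty set, so the check in Line~\ref{item:line 1} fires for $r_2$ and $r_2$ does not settle, contradicting the assumption.

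Case II (same round $k$). Both robots observe identical co-located groups $G := G^k_{r_1} = G^k_{r_2}$ and $G' := G'^k_{r_1} = G'^k_{r_2}$. For each to settle, each must pass Line~\ref{item:line 1}, so $G' \setminus B^k_{r_i} = \emptyset$ for $i = 1,2$ and both iterations in Line~\ref{item:line 2} begin with $S^k = \emptyset$. The key structural remark is that the body of that loop depends only on $G$, $G'$, and $\{B^k_s : s \in G\}$, none of which depend on whether the executor is $r_1$ or $r_2$; consequently the two computations produce the same set, call it $S$. Now assume WLOG $\mathrm{ID}(r_1) < \mathrm{ID}(r_2)$, and use the earlier observation that $r_1 \notin B^k_{r_2}$ and $r_2 \notin B^k_{r_1}$. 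If $r_1 \in S$, the tail check of Line~\ref{item:line 2} fires for $r_2$ (witnessed by $r_1$), a contradiction. Otherwise $r_1 \notin S$: at the iteration step for $s = r_1$, condition (a) is automatic because it reduces to the Line~\ref{item:line 1} condition $r_1$ already passed, so condition (b) must have failed, meaning some $t \in S$ with $t \notin B^k_{r_1}$ was already present. Such a $t$ necessarily has $\mathrm{ID}(t) < \mathrm{ID}(r_1)$ (ascending-order processing) and remains in $S$ to the end, so the tail of Line~\ref{item:line 2} forces $r_1$ not to settle, another contradiction.

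The main obstacle is the "identical iteration" step in Case II: one has to read Line~\ref{item:line 2} carefully to confirm that the only place where the executor's identity enters the procedure is the initial value of $S^k_r$ (handled by Line~\ref{item:line 1} passing for both) and the final tail check, while the inner condition uses $B^k_s$ rather than $B^k_r$. Once that symmetry is established the two-sided ID comparison becomes a clean contradiction; Case I is the easier warm-up and mainly requires the position-tracking argument to show that $r_2$ has no earlier opportunity to blacklist $r_1$.
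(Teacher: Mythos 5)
Your proof is correct and takes essentially the same route as the paper's: split on whether one of the two robots is already settled at $v_k$ (so the other's check in step~1 fires, because a non-Byzantine settled robot never lands in the other's blacklist $B^k_r$) versus both arriving unsettled in the same round (where the executor-independence of the step~2 loop plus the ascending-ID ordering forces at least one of the two not to settle). Your Case~II is a more explicit, two-sided version of the paper's terser ``if the lower-ID robot settles, the higher-ID robot's check succeeds''; the only quibble is that the ``fixed clockwise offset'' remark in Case~I is unnecessary (and not quite accurate if the two robots orient clockwise oppositely) --- what actually matters is that a non-Byzantine $r_1$ announces settling only at $v$ from round $k_1$ onward while $r_2$ visits $v$ only in round $k_2$, which is exactly the content of the paper's Observation.
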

\begin{proof}
Let $r$ and $s$ be two non-Byzantine robots co-located on node $v_k$. If one of them is already settled there, say $s$ without loss of generality, then $r$ will not settle on the node because of the check in Line~\ref{item:line 1}. This is because if $s$ is non-Byzantine, then it would not appear in $B^k_r$.

If both $r$ and $s$ are in $G^k_r$, i.e., both are not yet settled at the node, then we show that it is impossible for both of them to settle at $v_k$. Without loss of generality, let the ID of $s$ be less than that of $r$. If $s$ passes one of the two checks in Line~\ref{item:line 1} and Line~\ref{item:line 2} and thus does not settle at $v_k$, we do not need to show anything further. However, if $s$ chooses to settle down at node $v_k$, then it is guaranteed that the check in Line~\ref{item:line 2} will succeed for robot $r$, and thus $r$ will not settle at $v_k$.
\end{proof}

Now we are ready to prove Theorem \ref{theorem:nrounds}.
\begin{proof}[Proof of Theorem \ref{theorem:nrounds}]
Let $r$ be a non-Byzantine robot. By Lemma~\ref{lemma:non-Byzantine}, we see that in each round $k$, $1 \leq k \leq n$, if $r$ is not yet settled, then at least one non-Byzantine robot's ID (from $r$'s perspective) is written in $A_r[k]$. So, by the end of round $n$, $r$ definitely found a node to settle at as there are only $n$ robots in total. This is true for all non-Byzantine robots. Thus after $n$ rounds, each non-Byzantine robot has settled at some node.

By Lemma~\ref{lemma:singlesettles}, no two non-Byzantine robots settle at the same node. Thus, after $n$ rounds, Byzantine dispersion is solved.
\end{proof}

\section{Conclusions and Future Work}
\label{sec:conclusion}
To recap, we have presented a time-memory optimal deterministic algorithm to solve Byzantine dispersion on a ring. We then presented two deterministic algorithms which were either only time optimal or only memory, but allowed us to solve Byzantine dispersion using less assumptions. Additionally, we developed a useful primitive to achieve Byzantine dispersion, when all non-Byzantine robots are initially present on the same node, in a time-memory optimal manner and requiring no assumptions.

A number of interesting research directions that result from this paper. An interesting open problem arising from our work is that of reducing assumptions required to achieve dispersion. Specifically, our time and memory optimal algorithm required the following four assumptions: (i) the robots' unique IDs are taken from the range $[1,n]$, (ii) each robot knows the value of $f$, (iii) $f \leq \lfloor (n-4)/17 \rfloor$, and (iv) the \textit{follow} primitive held. Is it possible to develop a time and memory optimal algorithm that drops one or all of these assumptions? This paper only focused on solutions of the problem for the ring. An exciting line of research is to study this problem on other types of graphs and eventually develop algorithms that are optimal for any graph. 
Another type of generalization relates to time; an understanding of how solutions to Byzantine dispersion look in the asynchronous system warrants study.

There is also an interesting line of research available to pursue. As mentioned in the related work, there is not much literature on exploration in the presence of faulty robots. Since dispersion of $n$ robots on a $n$-node graph also requires exploration of all the nodes, intuitively, solutions to dispersion should readily lend themselves to exploration. Furthermore, dispersion is at least as hard as exploration and possibly harder. Thus, we believe that this paper might open the doors to new and interesting solutions for exploration in the presence of faulty robots.
\bibliographystyle{plain}
\bibliography{ref}

\begin{thebibliography}{10}

\bibitem{AADFP06}
Dana Angluin, James Aspnes, Zo{\"e} Diamadi, Michael~J. Fischer, and Ren{\'e}
  Peralta.
\newblock Computation in networks of passively mobile finite-state sensors.
\newblock {\em Distributed computing}, 18(4):235--253, 2006.

\bibitem{ABCTU13}
C{\'e}dric Auger, Zohir Bouzid, Pierre Courtieu, S{\'e}bastien Tixeuil, and
  Xavier Urbain.
\newblock Certified impossibility results for byzantine-tolerant mobile robots.
\newblock In {\em Symposium on Self-Stabilizing Systems}, pages 178--190.
  Springer, 2013.

\bibitem{Augustine:2018}
John Augustine and William~K. Moses, Jr.
\newblock Dispersion of mobile robots: {A} study of memory-time trade-offs.
\newblock {\em CoRR}, abs/1707.05629, [v4] 2018 (a preliminary version appeared
  in ICDCN'18).

\bibitem{Bampas:2009}
Evangelos Bampas, Leszek G{a}sieniec, Nicolas Hanusse, David Ilcinkas, Ralf
  Klasing, and Adrian Kosowski.
\newblock Euler tour lock-in problem in the rotor-router model: I choose
  pointers and you choose port numbers.
\newblock In {\em DISC}, pages 423--435, 2009.

\bibitem{Barriere2009}
L.~Barriere, P.~Flocchini, E.~Mesa-Barrameda, and N.~Santoro.
\newblock Uniform scattering of autonomous mobile robots in a grid.
\newblock In {\em IPDPS}, pages 1--8, 2009.

\bibitem{B18}
Vince Beiser.
\newblock The robot assault on fukushima.
\newblock {\em Wired}.

\bibitem{BDD16}
S{\'{e}}bastien Bouchard, Yoann Dieudonn{\'{e}}, and Bertrand Ducourthial.
\newblock Byzantine gathering in networks.
\newblock {\em Distributed Computing}, 29(6):435--457, 2016.

\bibitem{BDL18}
S{\'{e}}bastien Bouchard, Yoann Dieudonn{\'{e}}, and Anissa Lamani.
\newblock Byzantine gathering in polynomial time.
\newblock In {\em 45th International Colloquium on Automata, Languages, and
  Programming, {ICALP} 2018, July 9-13, 2018, Prague, Czech Republic}, pages
  147:1--147:15, 2018.

\bibitem{BPT09}
Zohir Bouzid, Maria~Gradinariu Potop-Butucaru, and S{\'e}bastien Tixeuil.
\newblock Byzantine-resilient convergence in oblivious robot networks.
\newblock In {\em International Conference on Distributed Computing and
  Networking}, pages 275--280. Springer, 2009.

\bibitem{CP16}
David Caissy and Andrzej Pelc.
\newblock Exploration of faulty hamiltonian graphs.
\newblock {\em International Journal of Foundations of Computer Science},
  27(07):809--827, 2016.

\bibitem{CDLP16}
J{\'{e}}r{\'{e}}mie Chalopin, Yoann Dieudonn{\'{e}}, Arnaud Labourel, and
  Andrzej Pelc.
\newblock Rendezvous in networks in spite of delay faults.
\newblock {\em Distributed Computing}, 29(3):187--205, 2016.

\bibitem{CFPS12}
Mark Cieliebak, Paola Flocchini, Giuseppe Prencipe, and Nicola Santoro.
\newblock Distributed computing by mobile robots: Gathering.
\newblock {\em {SIAM} J. Comput.}, 41(4):829--879, 2012.

\bibitem{CP02}
Mark Cieliebak and Giuseppe Prencipe.
\newblock Gathering autonomous mobile robots.
\newblock In {\em Proc. of the 9th International Colloquium on Structural
  Information and Communication Complexity (SIROCCO)}, pages 57--72, 2002.

\bibitem{Cohen:2008}
Reuven Cohen, Pierre Fraigniaud, David Ilcinkas, Amos Korman, and David Peleg.
\newblock Label-guided graph exploration by a finite automaton.
\newblock {\em ACM Trans. Algorithms}, 4(4):42:1--42:18, August 2008.

\bibitem{CP04}
Reuven Cohen and David Peleg.
\newblock Robot convergence via center-of-gravity algorithms.
\newblock In {\em Proc. of the 11th International Colloquium on Structural
  Information and Communication Complexity, (SIROCCO)}, pages 79--88, 2004.

\bibitem{Cybenko:1989}
G.~Cybenko.
\newblock Dynamic load balancing for distributed memory multiprocessors.
\newblock {\em J. Parallel Distrib. Comput.}, 7(2):279--301, October 1989.

\bibitem{CGKKNOS16}
Jurek Czyzowicz, Konstantinos Georgiou, Evangelos Kranakis, Danny Krizanc, Lata
  Narayanan, Jaroslav Opatrny, and Sunil Shende.
\newblock Search on a line by byzantine robots.
\newblock {\em arXiv preprint arXiv:1611.08209}, 2016.

\bibitem{Das13}
Shantanu Das.
\newblock Mobile agents in distributed computing: Network exploration.
\newblock {\em Bulletin of the {EATCS}}, 109:54--69, 2013.

\bibitem{DFLMS19}
Shantanu Das, Riccardo Focardi, Flaminia~L. Luccio, Euripides Markou, and Marco
  Squarcina.
\newblock Gathering of robots in a ring with mobile faults.
\newblock {\em Theor. Comput. Sci.}, 764:42--60, 2019.

\bibitem{DLM15}
Shantanu Das, Flaminia~L. Luccio, and Euripides Markou.
\newblock Mobile agents rendezvous in spite of a malicious agent.
\newblock In {\em Proc. of the 11th International Symposium on Algorithms and
  Experiments for Wireless Sensor Networks (ALGOSENSORS)}, pages 211--224,
  2015.

\bibitem{DHRS19}
Joshua~J. Daymude, Kristian Hinnenthal, Andr{\'e}a~W Richa, and Christian
  Scheideler.
\newblock Computing by programmable particles.
\newblock In {\em Distributed Computing by Mobile Entities}, pages 615--681.
  Springer, 2019.

\bibitem{DKLHPW11}
Bastian Degener, Barbara Kempkes, Tobias Langner, Friedhelm {Meyer auf der
  Heide}, Peter Pietrzyk, and Roger Wattenhofer.
\newblock A tight runtime bound for synchronous gathering of autonomous robots
  with limited visibility.
\newblock In {\em Proc. of the 23rd Annual {ACM} Symposium on Parallelism in
  Algorithms and Architectures (SPAA)}, pages 139--148, 2011.

\bibitem{Dereniowski:2015}
Dariusz Dereniowski, Yann Disser, Adrian Kosowski, Dominik Pajak, and
  Przemyslaw Uzna\'{n}ski.
\newblock Fast collaborative graph exploration.
\newblock {\em Inf. Comput.}, 243(C):37--49, August 2015.

\bibitem{DP12}
Yoann Dieudonn{\'e} and Andrzej Pelc.
\newblock Deterministic network exploration by a single agent with byzantine
  tokens.
\newblock {\em Information Processing Letters}, 112(12):467--470, 2012.

\bibitem{DPP14}
Yoann Dieudonn{\'{e}}, Andrzej Pelc, and David Peleg.
\newblock Gathering despite mischief.
\newblock {\em {ACM} Trans. Algorithms}, 11(1):1:1--1:28, 2014.

\bibitem{ElorB11}
Yotam Elor and Alfred~M. Bruckstein.
\newblock Uniform multi-agent deployment on a ring.
\newblock {\em Theor. Comput. Sci.}, 412(8-10):783--795, 2011.

\bibitem{FKMS12}
Paola Flocchini, Matthew Kellett, Peter~C. Mason, and Nicola Santoro.
\newblock Fault-tolerant exploration of an unknown dangerous graph by scattered
  agents.
\newblock In {\em Proc. of the 14th International Symposium on Stabilization,
  Safety, and Security of Distributed Systems (SSS)}, pages 299--313, 2012.

\bibitem{Fraigniaud:2005}
Pierre Fraigniaud, David Ilcinkas, Guy Peer, Andrzej Pelc, and David Peleg.
\newblock Graph exploration by a finite automaton.
\newblock {\em Theor. Comput. Sci.}, 345(2-3):331--344, November 2005.

\bibitem{HTNOI20}
Jion Hirose, Masashi Tsuchida, Junya Nakamura, Fukuhito Ooshita, and Michiko
  Inoue.
\newblock Brief announcement: Gathering with a strong team in weakly byzantine
  environments.
\newblock In {\em Proceedings of the 27th International Colloquium on
  Structural Information and Communication Complexity (SIROCCO)}. Springer,
  2020.

\bibitem{Kshemkalyani}
Ajay~D. Kshemkalyani and Faizan Ali.
\newblock Efficient dispersion of mobile robots on graphs.
\newblock In {\em ICDCN}, pages 218--227, 2019.

\bibitem{KMS2019}
Ajay~D. Kshemkalyani, Anisur~Rahaman Molla, and Gokarna Sharma.
\newblock Fast dispersion of mobile robots on arbitrary graphs.
\newblock In {\em ALGOSENSORS}, 2019.

\bibitem{KMS2020-ICDCN}
Ajay~D. Kshemkalyani, Anisur~Rahaman Molla, and Gokarna Sharma.
\newblock Dispersion of mobile robots in the global communication model.
\newblock In {\em ICDCN}, 2020.

\bibitem{KMS2020}
Ajay~D. Kshemkalyani, Anisur~Rahaman Molla, and Gokarna Sharma.
\newblock Dispersion of mobile robots on grids.
\newblock In {\em WALCOM}, 2020.

\bibitem{MS19}
Euripides Markou and Wei Shi.
\newblock Dangerous graphs.
\newblock In {\em Distributed Computing by Mobile Entities}, pages 455--515.
  Springer, 2019.

\bibitem{MencPU17}
Artur Menc, Dominik Pajak, and Przemyslaw Uznanski.
\newblock Time and space optimality of rotor-router graph exploration.
\newblock {\em Inf. Process. Lett.}, 127:17--20, 2017.

\bibitem{MP15}
Avery Miller and Andrzej Pelc.
\newblock Tradeoffs between cost and information for rendezvous and treasure
  hunt.
\newblock {\em J. Parallel Distributed Comput.}, 83:159--167, 2015.

\bibitem{tamc19}
Anisur~Rahaman Molla and William K.~Moses Jr.
\newblock Dispersion of mobile robots: The power of randomness.
\newblock In {\em TAMC}, pages 481--500, 2019.

\bibitem{PRT11}
Maria Potop{-}Butucaru, Michel Raynal, and S{\'{e}}bastien Tixeuil.
\newblock Distributed computing with mobile robots: An introductory survey.
\newblock In {\em The 14th International Conference on Network-Based
  Information Systems (NBiS)}, pages 318--324, 2011.

\bibitem{Poudel18}
Pavan Poudel and Gokarna Sharma.
\newblock Time-optimal uniform scattering in a grid.
\newblock In {\em ICDCN}, pages 228--237, 2019.

\bibitem{P07}
Giuseppe Prencipe.
\newblock Impossibility of gathering by a set of autonomous mobile robots.
\newblock {\em Theor. Comput. Sci.}, 384(2-3):222--231, 2007.

\bibitem{Shibata:2016}
Masahiro Shibata, Toshiya Mega, Fukuhito Ooshita, Hirotsugu Kakugawa, and
  Toshimitsu Masuzawa.
\newblock Uniform deployment of mobile agents in asynchronous rings.
\newblock In {\em PODC}, pages 415--424, 2016.

\bibitem{Subramanian:1994}
Raghu Subramanian and Isaac~D. Scherson.
\newblock An analysis of diffusive load-balancing.
\newblock In {\em SPAA}, pages 220--225, 1994.

\bibitem{SY99}
Ichiro Suzuki and Masafumi Yamashita.
\newblock Distributed anonymous mobile robots: Formation of geometric patterns.
\newblock {\em {SIAM} J. Comput.}, 28(4):1347--1363, 1999.

\bibitem{TOI18}
Masashi Tsuchida, Fukuhito Ooshita, and Michiko Inoue.
\newblock Byzantine-tolerant gathering of mobile agents in arbitrary networks
  with authenticated whiteboards.
\newblock {\em {IEICE} Transactions}, 101-D(3):602--610, 2018.

\end{thebibliography}
\end{document}